\newcommand{\reals}{\mathbb{R}}
\newcommand{\nats}{\mathbb{N}}
\DeclareMathOperator{\mf}{\enspace .}
\DeclareMathOperator{\mc}{\enspace ,}
\newtheorem{lemma}{Lemma}
\newtheorem{theorem}{Theorem}
\newtheorem{corollary}{Corollary}
\newtheorem{conjecture}{Conjecture}
\newtheorem{remark}{Remark}
\newcommand{\rvar}{r}
\title{Data Structures for Task-based Priority Scheduling\thanks{A
    short poster summary of this paper was presented at the 19th ACM
    PPoPP 2014 conference~\cite{ppopp14}.}  }
\author[1]{Martin Wimmer}
\author[2]{Daniel Cederman}
\author[1]{Francesco Versaci}
\author[1]{Jesper Larsson Tr{\"a}ff}
\author[2]{Philippas Tsigas}
\affil[1]{Faculty of Informatics\\Vienna University of Technology\\
Favoritenstrasse 16, 1040 Vienna, Austria\\ \texttt{\{wimmer,versaci,traff\}@par.tuwien.ac.at}}
\affil[2]{Computer Science and Engineering\\Chalmers University of Technology\\
412 96 G\"oteborg, Sweden\\ \texttt{\{cederman,tsigas\}@chalmers.se}}
\begin{document}

\maketitle

\begin{abstract}
Many task-parallel applications can benefit from attempting to execute
tasks in a specific order, as for instance indicated by priorities
associated with the tasks.  We present three lock-free data structures
for priority scheduling with different trade-offs on scalability and
ordering guarantees. First we propose a basic extension to
work-stealing that provides good scalability, but cannot provide any
guarantees for task-ordering in-between threads. Next, we present a
centralized priority data structure based on $k$-fifo queues, which
provides strong (but still relaxed with regard to a sequential
specification) guarantees. The parameter $k$ allows to dynamically
configure the trade-off between scalability and the required ordering
guarantee. Third, and finally, we combine both data structures into a
hybrid, $k$-priority data structure, which provides scalability
similar to the work-stealing based approach for larger $k$, while
giving strong ordering guarantees for smaller $k$. We argue for using
the hybrid data structure as the best compromise for generic,
priority-based task-scheduling.

We analyze the behavior and trade-offs of our data structures in the
context of a simple parallelization of Dijkstra's single-source
shortest path algorithm. Our theoretical analysis and simulations show
that both the centralized and the hybrid $k$-priority based data
structures can give strong guarantees on the useful work performed by
the parallel Dijkstra algorithm.  We support our results with
experimental evidence on an 80-core Intel Xeon system.
\end{abstract}

\paragraph{Keywords:}
Task-parallel programming, priority scheduling,
$k$-priority data structure, work-stealing, parallel single-source shortest path algorithm

\section{Introduction}

Parallel tasks is a convenient parallel programming pattern for
exposing independent work-units that can be scheduled over multiple
processing elements.  The popular work-stealing
paradigm~\cite{BlumofeLeiserson99} is an efficient way to schedule
such parallel work-loads of independent tasks, and forms the basis for
well-known frameworks such as Cilk\verb!++!~\cite{Leiserson10}, Intel
Threading Building Blocks (TBB)~\cite{KukanovVoss07} and
X10~\cite{Charles05}.  Some task-parallel systems, like TBB and
StarPU~\cite{Augonnet10}, support assigning priorities to tasks to
influence the task execution order, with priorities typically
restricted to a small number of discrete values.  Some applications
that rely on priority
scheduling~\cite{QuintanaOrtivandeGeijnZeeChan09} resort to their own
centralized scheduling scheme, based on a shared priority queue.
However, it can be argued that shared priority queues are not
necessarily a good solution for the priority scheduling
problem~\cite{pingali2011}. Other schemes, that rely on decentralized
priority queues, cannot provide any guarantees on the execution order
of tasks in-between different
threads~\cite{WimmerCedermanTraffTsigas13:strategies,ppopp13}.

In this work we present three designs of lock-free data structures for
priority scheduling, each with different trade-offs concerning
scalability and scheduling guarantees.  The designs include a priority
work-stealing data structure, a centralized data structure inspired by
$k$-fifo queues~\cite{kfifo} with $k$-relaxed semantics, as introduced
by Afek et al.~\cite{AfekKorlandYanovsky10}, and a hybrid data
structure combining both ideas.  The designs support choosing the
value of $k$ per task, allowing kernels with different ordering
requirements to coexecute. Using the single-source shortest path
problem as an example, we show how the different approaches affect the
prioritization and show how bounds on the number of examined nodes can
be given. We argue that priority task scheduling allows for an
intuitive and easy way to parallelize the otherwise hard to
efficiently parallelize single-source shortest path
problem. Experimental evidence supports the good scalability of the
resulting algorithm.

The larger aim of this work is to understand trade-offs between
priority guarantees and scalability in task scheduling systems. We
show that $\rho$-relaxation is a valuable technique for improving
scalability while still providing semantic guarantees. The lock-free,
hybrid $k$-priority data structure shows that data structures can be
implemented that have scalability on par with work-stealing, while at
the same time providing strong priority scheduling guarantees,
depending on the value used for $k$. Our theoretical results open up
possibilities for even more scalable data structures due to further
relaxations that do not influence the bounds.  A \verb!C++!
implementation of our data structures and applications is available
for download as part of the open source task-scheduling framework
Pheet~\cite{WimmerCedermanTraffTsigas13:strategies,ppopp13,mtaap13}\footnote{\url{http://www.pheet.org}}.

\section{The model}

The focus of this work is the presentation and evaluation of data
structures for task scheduling with priorities. The model used is the
async-finish model, which is well-known from X10~\cite{Charles05} and
other task-based programming models, where new tasks can be spawned
throughout the execution of a task. Tasks can be synchronized using
finish regions. A finish region is a blocking synchronization
primitive, where execution can only continue after all tasks
transitively spawned inside the finish region have been executed.

We extend the task model to support priority scheduling. Our model of
priority scheduling relies on a comparison operator between tasks,
which can be specific to an application/algorithm. We call the
comparison operator the \emph{priority function} for the remainder of
the paper. Our framework allows the programmer to store
application-specific information alongside a task, to be used in the
priority function. As an example, in the single-source shortest path
application used for the evaluation in Section~\ref{sec:eval}, each
task represents a single node relaxation. The priority function for
this application uses the length of the shortest path found so far for
each node, and prioritizes nodes with smaller distance values, similar
to Dijkstra's algorithm. The model was described in our previous
work~\cite{WimmerCedermanTraffTsigas13:strategies}, which discusses
programmability aspects with other example applications.

Our scheduling system relies on \emph{help-first}
scheduling~\cite{GuoBarik09}, where newly spawned tasks are stored for
later execution by any thread, and the current thread proceeds with
the continuation.  This can be contrasted with \emph{work-first}
scheduling, where the continuation is stored for later execution and
the newly spawned task is executed by the current thread.  Work-first
scheduling has better space bounds for general task scheduling, but it
is not feasible for priority scheduling since it relies on a fixed
order in which tasks are executed (depth-first). Instead, a priority
function has to be chosen for tasks that gives bounds on the number of
concurrently available tasks. For many applications, like the
single-source shortest path application used in this paper, the
intuitive prioritization scheme inherently has bounds on the number of
concurrently available tasks.

In our model, the task scheduling system has multiple threads of
execution, each with its own supporting data structures. We use the
term \emph{place} to denote a single thread of execution and its
supporting local data structures. Whenever a place is idle, it
retrieves a task that has been stored for later execution and executes
it until it is finished. The scheduling system terminates when all
tasks have finished executing and no new tasks were created.

\subsection{Data structure model}

In this paper we discuss three different approaches for how a priority
data structure for storing tasks in our model can be implemented. All
three data structures rely on the same interface to interact with the
scheduling system. Each data structure consists of a
\emph{centralized}, global component that is shared by all places, and
which is accessed by every place in the same manner. In addition, each
place stores a separate, \emph{local} component of the data
structure. This allows for asymmetric access schemes, where the
\emph{owner} of the local component (the thread associated with the
place) is the only thread allowed to perform specific operations,
thereby allowing for simpler synchronization schemes.

The scheduling system interacts with the data structure using two
functions, \texttt{push} and \texttt{pop}. Both functions are executed
in the context of a specific place, therefore giving access to the
local component of the priority data structure for the given
place. The function \texttt{push} is called whenever a new task is
spawned, and stores the new task in the data structure for later
execution. The function \texttt{pop} returns a task and deletes it
from the data structure. Each task that has been added to the data
structure will be returned by \texttt{pop} exactly once. We allow
\texttt{pop} to spuriously fail as long as another thread is making
progress. The task returned by \texttt{pop} does not necessarily have
to be the highest priority task. The guarantees on the ordering of
tasks provided are specific to the data structure implementation and
are discussed in Section~\ref{sec:eval}.

\subsection{$\rho$-relaxation}
\label{sec:k-relax}

In order to improve the scalability of the proposed data structures,
we adopt a $\rho$-relaxation scheme, as introduced by Afek et
al.~\cite{AfekKorlandYanovsky10}, which is a temporal property that
allows certain items in the data structure to be \emph{ignored}. We
say an item is ignored whenever an item of lower priority is returned
by a \texttt{pop} operation.

The centralized $k$-priority data structure presented in this work
satisfies $\rho$-relaxation in the following sense: a \texttt{pop}
operation is allowed to ignore the last $k$ items added to the data
structure, which, in the worst case, might be the top $\rho=k$ by
priority.

On the other hand, \texttt{pop} operations for the hybrid $k$-priority
data structure are allowed to ignore the last $k$ items added by each
thread, which implies that, being $P$ the number of threads, up to
$\rho=Pk$ items might be ignored in total.

\section{Data structures}
\label{sec:ds}

In this section we give a high level description of the priority data
structures for task scheduling compared in this paper.

\subsection{Work-stealing}

One approach that we evaluate is to adapt work-stealing to priority
scheduling, by using priority queues instead of standard deques,
similar to an approach presented in previous work of the
authors~\cite{WimmerCedermanTraffTsigas13:strategies,ppopp13}. This
preserves the scalability of work-stealing, while imposing local
prioritization on tasks. Due to the decentralized nature of
work-stealing, where each thread is only aware of its own tasks, no
global priority ordering can be established. Therefore, no guarantee
can be given on the priority of tasks that are being executed.

Our implementation of work-stealing uses a local priority queue per
place, which is used by the \texttt{push} and \texttt{pop} operations
to store and prioritize tasks.  When the \texttt{pop} operation is
called on a place where the priority queue is empty, it chooses a
random place and steals half the tasks from that place's priority
queue. Stealing half the tasks allows tasks that are generated at one
place to quickly spread throughout the system~\cite{Hendler02}.

\subsection{Centralized $k$-priority data structure}
\label{sec:centralized}

The straightforward way to maintain strong guarantees on the priority
of tasks is to use a data structure with the semantics of a
centralized, global priority queue. Each \texttt{push} and
\texttt{pop} operation directly communicates with the priority
queue. It has been shown that a centralized, global priority queue
exhibits a lot of congestion when used in a scheduling
system~\cite{pingali2011}, since all threads try to access the highest
priority task. To reduce congestion, we use a $\rho$-relaxation scheme
as described in Section~\ref{sec:k-relax}.

\subsection{Hybrid $k$-priority data structure}

The hybrid $k$-priority data structure combines the work-stealing and
the $\rho$-relaxation ideas into a single data structure. The main
idea is that each place maintains its own, local priority queue, and
that synchronization is only performed if either a place runs out of
work, or if the guarantees provided by $\rho$-relaxation are violated.

\section{Implementations}

In this section we provide the implementations of both the
$k$-priority data structures. Since the $k$-priority data structures
constitute the main contribution of the paper, we omit the details of
the work-stealing data structure.

\subsection{Centralized $k$-priority data structure}
\begin{figure}[t]
\centering
\includegraphics{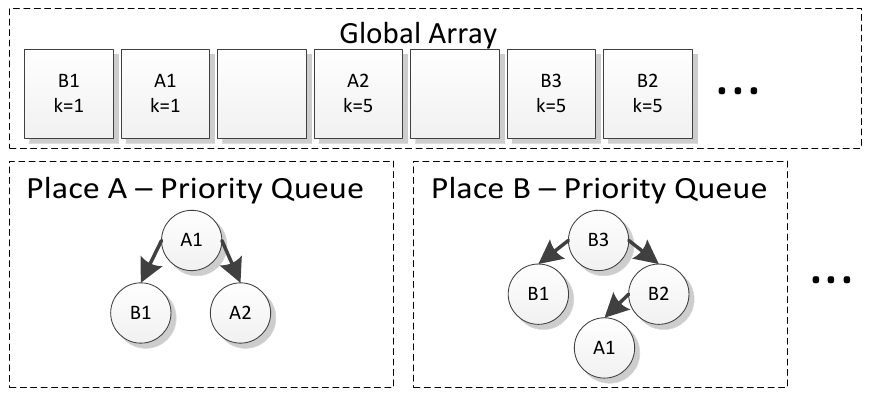}
\caption{Centralized $k$-priority data structure. Each place maintains
  its own priority queue with references to items in the global
  array. The newest (rightmost) items in the global array are only
  visible to the place that created them.}
\label{fig:centralkdesign}
\end{figure}

The basic idea of the centralized $k$-priority data structure is to
create a global priority ordering between all the tasks available in
the system, while allowing each thread to miss up to $k$ of the newest
tasks, as long as each task is seen by at least one thread. To achieve
such a $\rho$-relaxation, we split the data structure into two
components. One component, a global, shared array, is used to share
tasks between all threads and to maintain information about which
tasks must be globally visible so as not to violate the
$k$-requirement.  Randomization is used to improve scalability when
adding elements to the global array. The other component, which
consists of local priority queues for each place (thread), is used to
maintain the priority ordering of the tasks visible to each
place. This is depicted in Figure~\ref{fig:centralkdesign}. Any
sequential implementation of a priority queue can be used for the
local priority queues, since each priority queue is only accessed in
the context of a single place, and therefore only by a single thread.

\subsubsection{The \texttt{push} operation}

The \texttt{push} operation stores the task, together with some
additional information, in a structure which we call an
\emph{item}. For an item to be visible to all threads, it needs to be
added to the global array.  The items in the global array are stored
in an order close to sequential. A task may only be placed up to $k$
positions away from its correct sequentially consistent position.

Pseudocode for the \texttt{push} operation is shown in
Listing~\ref{lst:centralk_push}. There, we choose a random position in
the range from \texttt{tail} to $\texttt{tail} + k$ and try to put the
item into the array at the chosen position, if the position has not
yet been taken by another item. In case the position is taken, a
linear search is performed inside the \texttt{tail} to $\texttt{tail}
+ k$ range until a free position is found or all positions have been
checked. If all positions are taken, \texttt{tail} can be updated to
$\texttt{tail} + k$ and the search restarted. This scheme for adding
items to an array was inspired by the $k$-fifo queues of Kirsch et
al.~\cite{kfifo}.

As soon as the item has been added to the global array, a reference to
it is added to the priority queue of the place at which it was
created. This guarantees that at least one thread will attempt to
execute this task next, if it has the globally highest priority.

\begin{algorithm}[t]
\begin{lstlisting}[mathescape=true,columns=flexible,escapechar=!]
void push(Place place, int k, Task task) {
	Item it = new Item(place, k, task);	
	
	// Attempt until successful
	while(true) {
		int t = tail;
		
		// Choose a random offset at which to put item
		int offset = rand(0, k - 1);
	
		// try all indices in k-range starting at offset
		for(int i = offset; i < offset + k; ++i) {
			int pos = t + (i % k);
			// A tag of -1 refers to a taken item. We store pos
			// in the tag field to omit the ABA problem
			it.tag = pos;
			// Try to put item into global array
			if(CAS(global_array[pos], null, it)) {
				// Item was succesfully put into array
				// Now put a reference into local priority queue
				ItemRef ref = new ItemRef(pos, it);
				place.prio_queue.push(ref);				
				return;
		}}
		
		// No more free slot found, try updating tail
		// One thread will succeed, no need for checking which
		CAS(tail, t, t + k);
}}
\end{lstlisting}
\caption{Pseudocode for \texttt{push} in the centralized $k$-priority data structure.}
\label{lst:centralk_push}
\end{algorithm}

\subsubsection{The \texttt{pop} operation}
\begin{algorithm}[t]
\begin{lstlisting}[mathescape=true,columns=flexible,escapechar=!]
Task pop(Place place) {
	// Check for new tasks in global array
	while(place.head < tail) {
		if(global_array[place.head].place $\neq$ place) {
			ItemRef ref = new ItemRef(place.head, global_array[place.head]);
			place.prio_queue.push(ref);
	}}
	
	ItemRef ref;
	while(ref = place.prio_queue.pop()) {
		Task task = ref.it.task;
		// Take item atomically by setting tag to -1
		if(CAS(ref.it.tag, ref.tag, -1)) {
			// Success, return task
			return task;
		}
		// Recheck for new tasks in global array again
		... //  (not shown)
	}
	
	// Priority queue is empty, try to find random task
	int offset = rand(0, $k_{max}$ - 1);
	if(global_array[tail + offset] $\neq$ null &&
		global_array[tail + offset].k $\leq$ offset) {
		Item it = global_array[tail + offset];
		Task task = it.task
		// Take item atomically by setting tag to -1
		if(CAS(it.tag, tail + offset, -1))
			return task;
	}
	return null;
}
\end{lstlisting}
\caption{Pseudocode for \texttt{pop} in the centralized $k$-priority data structure.}
\label{lst:centralk_pop}
\end{algorithm}

Pseudocode for the \texttt{pop} operation can be found in
Listing~\ref{lst:centralk_pop}. The \texttt{pop} operation checks
whether \texttt{tail} has changed since the last time it was checked,
and if so adds all the newly added tasks to the local priority
queue. Each place maintains its own \texttt{head} index into the
global array, to track which items have already been seen. Tasks that
have been created by the same place can be omitted, since they were
already added to the priority queue at the push operation. Next, the
highest priority task is removed from the priority queue, and an
attempt is made to mark the task as taken, by atomically setting the
\texttt{tag} of the item to $-1$ using a \emph{compare-and-swap}
operation (CAS). Only one thread can succeed in updating the tag. In
case of failure, the global array is rechecked for new tasks before
trying again.

If the priority queue is empty, there can be up to $k$ tasks stored
after \texttt{tail} waiting for their execution, stored in the
\texttt{tail} of the global array and its $k$ subsequent
positions. Our data structure allows for varying values for $k$ per
task, thereby making it necessary to specify a maximum value for
$k$. We chose $k_{\max} = 512$ for our implementation. Since there are
at most $k$ tasks stored after \texttt{tail}, no priority ordering
needs to be guaranteed if there are no tasks before the \texttt{tail},
and a random position can be checked for a task to execute. If a task
is found, the $k$ value stored with the task is rechecked, to make
sure not more than $k$ tasks are ignored (\texttt{tail} might have
been updated in the meantime). Since we allow for spurious failures on
\texttt{pop} as long as another thread is making progress (executing a
task), it is not necessary to exhaustively search for all tasks stored
after tail, a random attempt suffices.

\subsubsection{Additional implementation details} 
\label{sec:centralk_details}

So far we have assumed that the global array used for storing tasks is
unbounded. In practice, we implemented the global array as a linked
list of arrays. Whenever an index is requested that is outside the
bounds of the existing arrays, a new array is allocated and added to
the end of the linked list using a single \emph{compare-and-swap}
operation.

Each array in the linked list can be deleted as soon as all tasks
stored in the array have been executed and the \texttt{head} indices
of all places point to positions in arrays that are successors of the
given array. The first condition can be lazily checked using a garbage
collection scheme. We use the wait-free garbage collection scheme by
Wimmer~\cite{Wimmer13} for these purposes. The second condition can be
checked by atomically decrementing a reference counter whenever a head
index moves on to the next array. If the reference counter was
initialized to the number of places in the beginning, it is guaranteed
that no place will scan the array for new tasks once the counter
reaches zero.

It is also necessary to clean up all the \emph{items} used for storing
tasks. For performance reasons we decided on a reuse scheme, where an
item can be reused for a new task as soon as the previous task has
been executed. The use of a tag for each item, which is initialized to
the item's position in the global array, guards against the
ABA-problem, since positions for items are strictly increasing. Also,
since items may be reused directly after the compare-and-swap, the
task has to be read out of the item before the compare-and-swap.

Both the \texttt{head} and \texttt{tail} indices in the data structure
are strictly growing, therefore it is necessary to take a possible
wraparound into account. We use 64-bit values, which ensures that
wraparounds will only occur after a long time. Due to the long
timespan between wraparounds, we consider it unlikely that an ABA
problem will occur due to colliding indices.

\subsubsection{Correctness}

In this section we argue that the centralized $k$-priority data
structure is lock-free and linearizable.

\begin{theorem}
\label{theorem:centralizedpush}
The \texttt{push} operation is lock-free and linearizable.
\end{theorem}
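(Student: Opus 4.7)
The plan is to establish the two properties separately, both resting on the atomic CAS at the write into \texttt{global\_array[pos]} in Listing~\ref{lst:centralk_push}. I will treat this CAS as the operation's linearization point and also as the progress witness.

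For lock-freedom, I would analyze the outer \texttt{while(true)} loop. A failure to return in a given iteration can only arise if each of the $k$ CAS attempts on \texttt{global\_array[t + (i \% k)]} failed, meaning those slots were already occupied. Each such occupation was effected by another thread's successful CAS on line~17, i.e., another completed \texttt{push}. Hence if one thread performs $n$ full failing iterations, then at least $n$ other \texttt{push} calls must have completed in the meantime, which is precisely the lock-free guarantee. The subsequent \texttt{CAS(tail, t, t+k)} is not a barrier to progress: whether it succeeds or fails, \texttt{tail} is monotonically advanced, and the attempting thread re-reads it at the top of the next iteration. The final \texttt{place.prio\_queue.push} is a purely local operation on a non-shared structure, which trivially terminates.

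For linearizability, I would designate the successful CAS at line~17 as the linearization point with respect to a $k$-relaxed priority-pool specification. Before this point the freshly allocated \texttt{Item} is reachable only from the local stack frame of the invoking thread; after it, the item sits at a fixed index \texttt{pos} of the global array and will be discovered by any place whose \texttt{head} eventually reaches \texttt{pos} via the loop at the top of \texttt{pop}. The operations that follow the CAS (setting up the \texttt{ItemRef} and pushing into the place's local priority queue) are strictly local to the owner place; they only affect \emph{which} copy of the global item the owner inspects first, not the set of items logically present in the shared state. The \texttt{CAS(tail, t, t+k)} is a cooperative hint and not an insertion event: it only advances the shared scan cursor, and the $\rho$-relaxation already permits any \texttt{pop} to ignore items lying in the window of length $k$ beyond \texttt{tail}, so this CAS never changes the set of items legally returnable by subsequent \texttt{pop}s. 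These considerations combine to show that inserting the \texttt{push} at its designated point yields a history consistent with the sequential $k$-relaxed specification.

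The main obstacle I expect is articulating the correct sequential specification. Because the data structure is $\rho$-relaxed rather than strictly ordered, I must show linearizability against a relaxed priority pool where \texttt{push} merely makes an item eligible (possibly ignorable for up to $k$ subsequent \texttt{pop}s), not against a strict priority queue. The auxiliary argument I will need is a monotonicity claim: the sequence of \texttt{tail} values observed by any thread is nondecreasing, and the pairs $(\texttt{pos}, \texttt{it})$ actually installed by successful line-17 CASes are pairwise position-distinct and never overwritten until the item is claimed (tag set to $-1$) by a \texttt{pop}. The tag field, initialized to \texttt{pos} and CAS'ed to $-1$ only on removal, is what prevents ABA on reused items and guarantees the installation event is unambiguous. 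Once these invariants are in hand, both lock-freedom and the single-point linearization are immediate.
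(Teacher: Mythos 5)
Your proposal is correct and follows essentially the same route as the paper: lock-freedom is argued by observing that every failed insertion CAS (and every forced advance of \texttt{tail}) witnesses a successful insertion by another thread, and linearizability is obtained by placing the linearization point at the successful CAS that installs the item into the global array, with the subsequent local-priority-queue update and the \texttt{tail} CAS treated as semantically inert. Your version is somewhat more explicit than the paper's about the relaxed sequential specification and the supporting invariants (monotonicity of \texttt{tail}, distinctness of installed positions, tag-based ABA protection), but the decomposition and key ideas coincide.
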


\begin{proof}
All memory allocation is done using a wait-free memory
manager~\cite{Wimmer13}. The \texttt{push} operation tries to find an
empty slot in the global array to insert its item. It searches $k$
positions from a local copy of the tail index. If no empty slot is
found, the tail is moved forward at least one step, either by the
current thread or a concurrent thread. This is repeated until an empty
slot is encountered.  If no empty slot is found, or the CAS used to
insert the item fails, another thread must have succeeded in inserting
at least one item.  This is in accordance with the lock-free property.

The \texttt{tail} can only be moved when all the slots before its new
position are filled. So when an item is inserted, it is guaranteed to
be at most $k$ steps from the \texttt{tail}, since the \texttt{tail}
cannot be moved before the successful insertion.  Push operations can
thus be linearized relative to each other at the point where they
manage to insert their items into the array. The actual value of
\texttt{tail} might differ from a sequential execution, but this does
not affect the semantics.
\end{proof}

\begin{theorem}
\label{theorem:centralizedpop}
The \texttt{pop} operation is lock-free and linearizable.
\end{theorem}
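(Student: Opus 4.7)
The plan is to address lock-freedom and linearizability separately, handling in turn the three code paths of \texttt{pop}: the catch-up loop, the priority-queue path, and the random-attempt fallback.

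For lock-freedom I would mirror the skeleton used in Theorem~\ref{theorem:centralizedpush}. Memory allocation is wait-free~\cite{Wimmer13}. The initial catch-up loop strictly advances \texttt{place.head} each iteration, so it can only run indefinitely if \texttt{tail} keeps growing, which in turn requires concurrent successful \texttt{push}es, i.e.\ system-wide progress. The next phase reads from the sequential local priority queue and performs a single CAS; a failing CAS means some other thread has already claimed that item, again progress. The fallback performs at most one CAS on a random slot in the range of length $k_{\max}$ starting at \texttt{tail} before returning; since the specification permits \texttt{pop} to fail spuriously provided that some other thread is making progress, the only scenario left to rule out is all threads returning \texttt{null} forever while the data structure is non-empty, but in that case some non-empty slot is eventually picked with positive probability, contradicting the assumption.

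For linearizability I would take the successful CAS on the item's \texttt{tag} (either in the priority-queue path or in the fallback path) as the linearization point of a successful \texttt{pop}; failed or spurious \texttt{pop}s need no linearization point under the given specification. The task is read out of the item before the CAS, which avoids the item-reuse and ABA issues discussed in Section~\ref{sec:centralk_details}. The semantic obligation is the $\rho$-relaxation from Section~\ref{sec:k-relax}: the returned item must differ from the highest-priority live item by at most $k$ ignored items, each being among the most recently added. On the priority-queue path, the catch-up loop ensures that just before the CAS, every position below some observed value $T \le \texttt{tail}$ is reflected in the local priority queue, so the only live items not in contention sit at positions at least $T$; by Theorem~\ref{theorem:centralizedpush} no live item sits more than $k$ positions past the current \texttt{tail}, which bounds the ignored set. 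On the fallback path, the explicit check that the item's stored $k$ does not exceed the chosen offset directly enforces the per-task $k$ bound.

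The step I expect to be the main obstacle is making the count of ignored items crisp in the presence of pushes that linearize during the \texttt{pop}. A push linearized after the catch-up loop but before the claiming CAS could naively appear to enlarge the ignored set beyond $k$. The resolution is to choose the \texttt{pop}'s linearization point so that such concurrent pushes are ordered after it in the abstract history, and then to verify that this ordering is always consistent with the observed CAS times. This should follow from the invariant used in Theorem~\ref{theorem:centralizedpush}, namely that \texttt{tail} only advances once all strictly earlier slots are filled: it lets me argue that no old live item can hide far behind the tail, and that a \texttt{push} and a concurrent \texttt{pop} can always be serialized in the order (\texttt{pop}, \texttt{push}) when the pushed item is not yet present in the view that the \texttt{pop} acted upon.
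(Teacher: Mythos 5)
Your proposal is correct and follows essentially the same route as the paper's proof: lock-freedom via the observation that every unbounded loop (catch-up, CAS retry) implies progress by another thread, and linearizability by placing successful \texttt{pop}s at the claiming CAS relative to one another while ordering them relative to \texttt{push}es according to the observed \texttt{tail} value, so that at most the $k$ items past \texttt{tail} can be ignored. Your treatment of the fallback path and of the concurrent-push ordering issue is somewhat more explicit than the paper's, but it is the same argument.
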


\begin{proof}
The \texttt{pop} operation has to check the global array for new
items.  This can be done in a bounded number of steps if no other
thread is making progress or adds new items. After reading the global
array, the operation tries to acquire one of the tasks referenced in
the priority queue. The size of the priority queue only grows when
another thread is making progress and adds new items to the global
array.

A \texttt{push} operation that inserts an item must be linearized
before the \texttt{pop} operation that reads it.  If the item is not
in the $k$-relaxed part after the tail, the \texttt{pop} operation
also needs to be linearized after the \texttt{push} operation that
updated the \texttt{tail} to the value seen by the \texttt{pop}
operation. Relative to each other, \texttt{pop} operations should be
linearized at the point where they take the item using CAS.

All \texttt{pop} operations that observe a certain \texttt{tail} are
linearized after the update of \texttt{tail} that stored the observed
value and before the next. Only items after the \texttt{tail} will be
ignored by any thread. At most $k$ additional items can be stored
after \texttt{tail} and therefore no more than $k$ items will be
ignored, regardless of the order in which items are popped from the
data structure.
\end{proof}

\subsection{Hybrid $k$-priority data structure}
\label{sec:k-prio-impl}
\begin{figure}[t]
\centering
\includegraphics{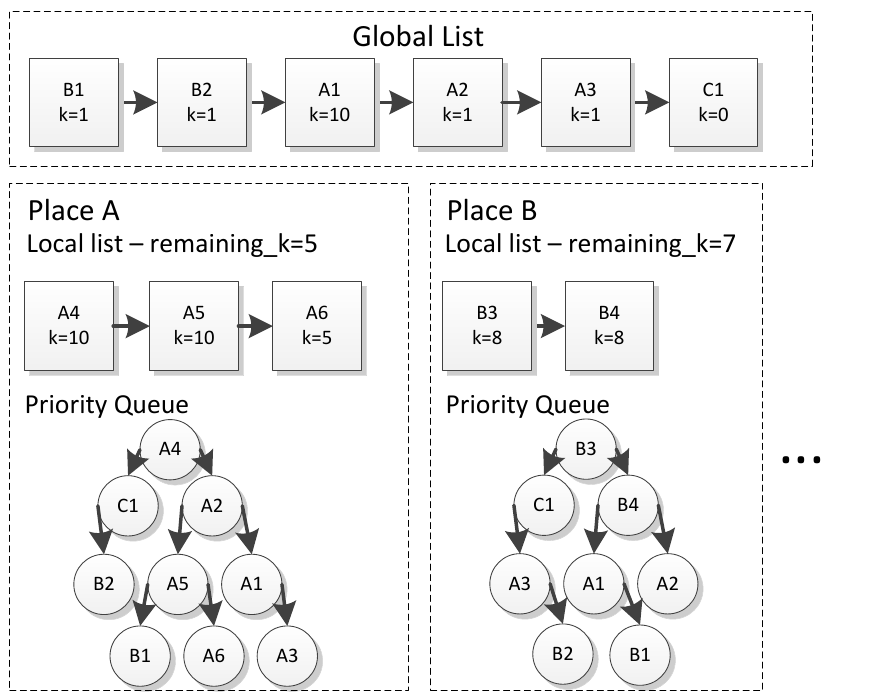}
\caption{Hybrid $k$-priority data structure. Each place maintains its
  own priority queue with references to items. Each place adds new
  items to its local list as long as the $\rho$-relaxation guarantees
  are not violated. If adding a new item would violate these
  guarantees, the local list is appended to the global list, and a new
  local list is created.}
\label{fig:hybridkdesign}
\end{figure}

The hybrid $k$-priority data structure consists of three components:
(a) a global list storing tasks visible to all places, (b) one local
task list per place, containing up to $k$ tasks that are not
guaranteed to be visible to all places, and (c) one priority queue per
place storing references to tasks in the global and local lists,
ordered by priority. After more than $k$ tasks have been added to the
local task list, the place makes all local tasks globally visible by
moving them to the global task list. A task can be referenced by
multiple priority queues at the same time, which is required to
guarantee that at most the $k$ newest tasks by each thread are missed.

\subsubsection{The \texttt{push} operation}
\begin{algorithm}[t]
\begin{lstlisting}[mathescape=true,columns=flexible,escapechar=!]
void push(Place place, int k, Task task) {
	Item it = new Item(place, priority, task);
	// Place task in local list and priority queue
	place.local_list.add(it);
	place.prio_queue.push(new ItemRef(it));

	// All items need to be made globally visibly
	// to not violate the $\rho$-relaxation requirement
	remaining_k = min(remaining_k - 1, k);
	if(remaining_k == 0) {
		// Add local list to global list
		do { 
			processGlobalList(place) 
		} while($\lnot$CAS(global_list.tail.next, null, local_list.head));
		// Create a new local list
		place.local_list = new List();
		remaining_k = !$\infty$!;
}}

// Add references to unread items from
// the global list to the local priority queue
void processGlobalList(Place place) {
	while(place.iterator$\neq$global_list.tail) {
		Item it = place.iterator.item()
		// Do not add local or already taken tasks
		if(it.place $\neq$ place and !$\lnot$!it.taken)
			place.prio_queue.push(new ItemRef(it));
		place.iterator = place.iterator.next;
}}
\end{lstlisting}
\caption{Pseudocode for \texttt{push} in the hybrid $k$-priority data
  structure.}
\label{lst:hybridkpush}
\end{algorithm}

The \texttt{push} operation (see Listing~\ref{lst:hybridkpush}) adds a
new task into the data structure. Each task is associated with a
specific value for $k$, which determines how many tasks are allowed to
be added before the task has to be made public. The semantics of $k$
are that no more than $k$ tasks are allowed to be added to the local
list of tasks before the given task must be published.

The \texttt{push} operation proceeds as follows: first, the task is
inserted into the local list of the given place, and a reference is
stored in the local priority queue of the place. Afterwards, a check
is performed whether more tasks can be added without needing to
publish any of the locally stored tasks, which is the case if the
variable \texttt{remaining\_k} is greater than zero. If any of the
locally stored tasks needs to be made available globally, the local
list of tasks is appended to the global list. A new, empty local list
is then created, which will be used in the next push operations.

\subsubsection{The \texttt{pop} operation}

\begin{algorithm}[t]
\begin{lstlisting}[mathescape=true,columns=flexible,escapechar=!]
Task pop(Place place) {
	do {
		processGlobalList(place);
		// Try to take the highest priority task
		while($\neg$place.prio_queue.empty()) {
			Ref r = prio_queue.pop();
			if(!$\lnot$!r.item.taken) {
				Task ret = r.item.task;
				if(TAS(r.item.taken))
					return ret;
			}
			processGlobalList(place);
		}
		
		// If the priority queue is empty, add references
		// to remote tasks from a pseudo-random place
		List vl = getRandVictim().local_list;
		foreach(Item it in vl) {
			if(it.place $\neq$ place and !$\lnot$!it.taken)
				place.prio_queue.push(new ItemRef(it));
		}
	} while($\neg$place.prio_queue.empty());
	return null;
}
\end{lstlisting}
\caption{Pseudocode for \texttt{pop} in the hybrid $k$-priority data structure.}
\label{lst:hybridkpop}
\end{algorithm}

The \texttt{pop} operation (see Listing~\ref{lst:hybridkpop}) pops a
reference to the highest-priority task from the local priority queue
and tries to mark the task as taken by setting the \texttt{taken} flag
with an atomic \emph{test-and-set} operation. If it succeeds the task
is returned.  To make sure that no more than $k$ tasks per place are
ignored, the local priority queue has to be regularly updated with the
newest additions to the global list.  This is always done before a
task is popped from the priority queue.

If the priority queue is empty after processing the global list, an
attempt is made to find tasks stored locally at another place. This is
called \emph{spying}. Spying is related to stealing in work-stealing
systems in that a remote place is selected semi-randomly among all
places and searched for tasks that have not yet been executed. The
main difference is that tasks that are encountered during spying are
not removed from the owner's local list of tasks. Instead, only
references to the given tasks are stored in the priority queue. This
is necessary to avoid breaking $\rho$-relaxation guarantees, but also
greatly simplifies synchronization.

Spying is only required when less than $k$ tasks are in flight at each
place, so that $k$-prioritization is not violated if not all the
victim's tasks are encountered during spying. Therefore the semantics
of spy allows spurious failures, where the spying place does not see
all the victim's tasks. This allows for a very lightweight
synchronization scheme for spying. Spying may lead to tasks appearing
in a single priority queue twice, but never more often, if the task
was first encountered during spying, and was later made available
globally. This does not affect the correctness however, since a task
can only be executed once.

\subsubsection{Additional implementation details} 

For efficiency reasons, our implementation of the hybrid $k$-priority
data structure does not use linked lists, but instead uses a linked
list of arrays, which can be implemented in a similar manner as for
the centralized $k$-priority data structure, as described in
Section~\ref{sec:centralk_details}.

Similarly, the memory management for \emph{items}, which are used to
store all the information about a task, can be taken over from the
centralized data structure. Alternatively, items can be stored
in-place in the linked list of arrays for higher efficiency. To guard
against the ABA problem when an item is reused, we use a tag instead
of an atomic flag to mark an item as taken, similar to the tag used in
the centralized $k$-priority data structure. Since, contrary to the
centralized data structure, an item has no global index at the time it
is added to the data structure, each place maintains its own local
indices, which are used to fill the tag field. An offset is stored in
each array in the linked list before it is linked to the global list,
to allow other threads to calculate the offsets of the items.

Spying does not put any tasks into the local task list of the spying
place (contrary to steal-half work-stealing), which makes the given
place appear as being out of work for other spying places. To ensure a
proper distribution of tasks throughout the whole system, each place
stores a reference to its last successful spying victim. In case a
victim is encountered with no local work, its last successful spying
victim is checked instead.

\subsubsection{Correctness}

In this section we argue that the hybrid $k$-priority data structure
is lock-free and linearizable.

\begin{theorem}
\label{theorem:push}
The \texttt{push} operation is lock-free and linearizable.
\end{theorem}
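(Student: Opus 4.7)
The plan is to treat the \texttt{push} operation as three conceptual phases and establish lock-freedom and linearizability for each, leaning heavily on the fact that the operation runs in the context of its owning place, so the only genuinely concurrent step is the publication of the local list.

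For lock-freedom, I would first observe that the allocation of the new \texttt{Item}, the insertion into \texttt{place.local\_list}, and the insertion into \texttt{place.prio\_queue} are all performed by the owner of \texttt{place} on structures accessed only by that owner; the memory manager of~\cite{Wimmer13} is wait-free, so these steps execute in a bounded number of instructions without any retry. The only place where retries can occur is the \texttt{do}-\texttt{while} that attempts to CAS \texttt{global\_list.tail.next} from \texttt{null} to \texttt{local\_list.head}. A CAS failure can only be caused by another thread having successfully installed its own local list into the same slot, which is system-wide progress. The helper \texttt{processGlobalList} traverses the list from the place's current iterator up to the tail observed at the call, so it walks a fixed finite suffix and terminates on its own; any items appended concurrently are handled by subsequent calls, not in the current one. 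Together, these observations give the lock-free property.

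For linearizability, I would pick the linearization point of a \texttt{push} to be the instant its owner executes \texttt{place.local\_list.add(it)}. From that instant on the item is reachable by every concurrent \texttt{pop} either locally (via the owner's priority queue) or remotely (via spying on \texttt{place.local\_list}), so it is in the data structure for the purposes of the specification. When publication occurs later in the same call, the successful CAS only enlarges the set of places that see the item; it never removes it, and it never reorders it with respect to pushes on the same place, because appending onto the global list is owner-sequential. Pushes on \emph{different} places that both try to publish are totally ordered by the order in which their CASes on \texttt{global\_list.tail.next} succeed, and that order is consistent with some serialization of their own linearization points.

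The subtle step, and the one I would spend the most care on, is justifying that this choice of linearization point is compatible with the $\rho$-relaxation semantics claimed in Section~\ref{sec:k-relax}. Concretely, I would show that between the linearization point of a push $p$ on place $P$ and the later publication of the local list containing $p$, at most $k$ further pushes can be linearized on $P$ (by the \texttt{remaining\_k} counter being reset only at publication and decremented exactly once per local push), and that every other place's pop that runs in this window either discovers $p$ via spying or ignores at most $k$ tasks of $P$. This step ties the local sequencing of publications to the global specification and is the main obstacle; once it is done, linearizability of \texttt{push} relative to the already-discussed \texttt{pop} and relative to concurrent pushes follows by assembling the per-place orders into a global order consistent with the successful-CAS order of publications.
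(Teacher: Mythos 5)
Your lock-freedom argument coincides with the paper's: local steps are owner-only and bounded, the wait-free memory manager covers allocation, \texttt{processGlobalList} traverses a finite suffix, and the only unbounded retry is the CAS on \texttt{global\_list.tail.next}, whose failure implies another place successfully published its list. Where you genuinely diverge is the linearizability argument. The paper splits into cases by when the item is \emph{taken}: a task popped before publication has its push linearized at the local-list insertion, while a task popped after being globally announced has its push linearized at the successful CAS that appends the local list to the global list. You instead choose a single, uniform linearization point at the local-list insertion and absorb the discrepancy into the $\rho$-relaxed specification, arguing that any pop linearized between insertion and publication is permitted to miss the item because at most $k$ unpublished items per place exist at any time. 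Both routes are viable for a relaxed specification. Your uniform choice has the advantage that the linearization point always lies inside the operation's own execution interval, whereas the paper's second case can place the linearization point inside a \emph{later} push by the same place, since publication is deferred until \texttt{remaining\_k} reaches zero. But it buys this at the price of the compatibility argument with $\rho$-relaxation, which you correctly identify as the crux yet only sketch: closing it requires tying the \texttt{remaining\_k} accounting to the pop operation's snapshot semantics, essentially re-deriving the $(P-1)k$ visibility bound that the paper defers to its proof of the pop theorem. The paper's case split avoids this extra work by making remote visibility and linearization coincide for remotely taken tasks, so the interaction with pop is immediate; your version is cleaner as a definition but leaves more to verify.
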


\begin{proof}
All memory allocation is done using a wait-free memory
manager~\cite{Wimmer13}. When there are less than $k$ tasks in the
local list, the entire push operation is done locally and is thus
wait-free. When the local list has $k$ tasks, it is added to the
global list. This step requires making sure that the entire global
list has been read and then adding the local list to the end.  Adding
the local list to the global list can fail if another place adds its
list first, but this means another place made progress. Reading and
adding to the global list is thus lock-free.

A push operation for a task which is taken before being added to the
global list, has its linearization point where it is added to the
local list. Before this point the task is not visible to any remote
place, while after the point it can be spied and taken by any place.
For tasks which are taken after being globally announced, the push
operation is linearized at the point where the local list was
atomically added to the global list. The pop operation always reads
all new tasks when reading from the global list, so push operations
can be linearized in any order relative to each other.
\end{proof}

\begin{theorem}
\label{theorem:pop}
The \texttt{pop} operation is lock-free and linearizable.
\end{theorem}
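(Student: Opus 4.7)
The plan is to follow the pattern set by Theorem~\ref{theorem:push}, splitting the argument into a lock-freedom part and a linearizability part. For lock-freedom I would first observe that \texttt{processGlobalList} only iterates up to a snapshotted \texttt{tail}, so each call terminates in a bounded number of steps. The inner loop then repeatedly pops a reference from the local priority queue and attempts a test-and-set on \texttt{taken}: if the TAS succeeds the pop returns; if it fails, some other thread must have taken that task, hence made progress. Once the priority queue empties, the spying step examines at most one victim's local list, whose length is bounded by $k$; if no suitable item is found, the operation either retries or returns \texttt{null}, which the data structure interface explicitly permits as a spurious failure provided that another thread is making progress. In every case, either the calling thread completes the pop or a competing thread has advanced, which is the lock-free property.

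For linearizability, the natural linearization point of a successful pop is the successful TAS that marks the item \texttt{taken}. This is consistent with the push linearization points chosen in Theorem~\ref{theorem:push}: an item is only reachable by the pop after it has either been inserted into some place's local list (for items obtained by spying or by the owner itself) or appended to the global list (for items collected through \texttt{processGlobalList}), and in both cases the push is linearized before the TAS. A pop that returns \texttt{null} is linearized at the end of its execution; I would then show that the set of items it could conceivably have returned at that moment consists only of items that fall in the $k$-window of some place, so the relaxed specification of Section~\ref{sec:k-relax} permits their being ignored, and the observation ``empty'' is consistent with a legal sequential history at the chosen point.

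The main obstacle is the linearizability of the non-returning branch, because a single pop does not see a consistent snapshot of the system: the global list, the local priority queue, and the single spied local list are read at different physical times. I would handle this by arguing that any item the pop could legitimately have been required to return has either already been taken by a concurrently linearized operation (and so is serialized first), or is one of the at most $k$ items still hidden in some place's local list, of which there are at most $Pk$ in total system-wide, matching the $\rho = Pk$ bound. A secondary subtlety to address is that spying may insert the same item reference into a priority queue twice, once during spying and once when the item is later added to the global list; since the TAS on \texttt{taken} can succeed at most once, this does not yield a duplicate return and therefore does not break the per-item ``returned exactly once'' requirement that linearizability depends on.
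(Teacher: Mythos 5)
Your lock-freedom argument matches the paper's: reading the global list is bounded unless other threads add to it, a failed TAS means another thread took the task, and spying touches a local list of length at most $k$, so that part is fine. The gap is in your choice of linearization points. You place a successful \texttt{pop} entirely at the successful TAS, and a \texttt{null}-returning \texttt{pop} at the end of its execution. The paper explicitly rejects the first choice and explains why: between the moment the place last read the global list and the moment the TAS succeeds, other places may have flushed arbitrarily many local lists into the global list. If the \texttt{pop} is linearized at the TAS, all of those \texttt{push} operations precede it in the sequential history, and the item it returns may then ignore far more than the last $k$ items per thread --- the $\rho = Pk$ relaxation bound is violated. Your remark that the TAS ``is consistent with the push linearization points'' only covers the ordering of the \emph{returned} item's push before the pop; it says nothing about the items the pop is permitted to \emph{ignore}, which is the whole content of the relaxed specification. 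The same problem afflicts your \texttt{null} branch: at the end of execution, items published to the global list after the last read are no longer ``in some place's $k$-window,'' so the claim you propose to prove there is false as stated.

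The paper's resolution is a split linearization that you would need to adopt: a \texttt{pop} is ordered relative to \emph{other pops} at the TAS (so that each item is returned exactly once and takers are serialized), but relative to \emph{pushes} at the point where the global list was last read --- at that instant the place provably holds a snapshot missing at most $(P-1)k$ tasks, which is within the $\rho = Pk$ allowance. Likewise the unsuccessful \texttt{pop} is linearized at its last global-list read, where both the global and local views were empty and only spy-reachable items (at most $k$ per remote place) could have been missed. Your closing observations --- that spying can duplicate a reference but the TAS prevents a duplicate return, and that the system-wide hidden items number at most $Pk$ --- are correct and consistent with the paper, but they do not repair the misplaced linearization points.
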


\begin{proof}
At certain points the pop operation needs to make sure it has read the
entire global list.  The global list can only grow if another place is
making progress, which makes reading the list lock-free. Multiple
places may try to acquire the same task, but only one will
successfully take it. The number of already taken tasks can only grow
if another place is making progress.  If the priority queue is empty
and the global list has been read, an attempt is made to spy on the
local list of another place. The length of the remote local list is
bounded by $k$, making the spying wait-free.

A successful pop operation has to be linearized relative to other pop
operations at the point where the task was atomically marked as
taken. Relative to push operations, the pop has to be linearized at
the point where the global list was last read. At this point the place
has a snapshot that contains all but at most $(P-1)k$ tasks, where $P$
is the number of places. If instead the pop was linearized when the
task was taken, another push operation could have added new tasks to
the global list before that, causing the place to miss more than the
allowed $Pk$ tasks.

An unsuccessful pop operation is linearized at the point where the
global list was last read. At this point both the global and the local
lists were empty.  The final spying part is allowed to fail even
though there might be tasks at other places.
\end{proof}

\section{Evaluation}
\label{sec:eval}

The goal of this section is to show how $\rho$-relaxation can help to
give bounds beyond what can be achieved with work-stealing for the
execution of a parallel application. For this we base our evaluation
on the well-understood single-source shortest path problem. We derive
theoretical bounds, which are also later verified through
simulation. Furthermore, we show that the bounds are applicable to all
variations of $\rho$-relaxation presented in this paper. In addition
we provide experiments that show the practical performance gains of
our approach.

\begin{algorithm}[t]

\begin{lstlisting}[mathescape=true,columns=flexible,escapechar=!]
void relaxNode(Graph graph, int node, int distance) {
	int d = graph[node].distance;
	if(d !$\neq$! distance) {
		// Dead task, distance has already been improved in the meantime
		return;
	}
	
	for(int i = 0; i < graph[node].num_edges; ++i) {
		int new_d = d + graph[node].edges[i].weight;
		int target = graph[node].edges[i].target;
		int old_d = graph[target].distance;
		
		// Check if path through this node is shorter
		while(old_d > new_d) {
			// Try to update distance value
			if(CAS(&(graph[target].distance), old_d, new_d)) {
				spawn(new_d, // priority, smaller is better
					relaxNode, // Function used for task
					graph, target, new_d)); 					
				break;
			}
			old_d = graph[target].distance;
}}}
\end{lstlisting}

\caption{Pseudocode for a single node relaxation in our parallel single-source shortest path algorithm.}
\label{lst:sssp}

\end{algorithm}

\subsection{Application}

We base our evaluation on a simple and well-understood example
application that profits from priorities, and consider the
single-source shortest path problem
(SSSP)~\cite{AhujaMagnantiOrlin93}. We focus on a simple
parallelization of Dijkstra's algorithm. Dijkstra's algorithm
maintains a tentative distance value for each node in the graph. At
each iteration, a node relaxation is performed, where the tentative
distance values of the neighboring nodes are decreased if the path
through the relaxed node is shorter. At termination, the distance from
the source node is available for each node. A priority queue is used
to decide the order in which nodes are relaxed; the priority ordering
guarantees that each node is relaxed exactly once.

Our parallel version relaxes multiple nodes in parallel. Due to the
parallelization some node relaxations might be performed prematurely,
when a node is not yet \emph{settled}, which means that its distance
value is not final. These nodes will have to be re-relaxed when their
distance values are updated. Premature relaxations are therefore
\emph{useless work}.

In our parallel implementation, each node that has to be relaxed
corresponds to a task in the scheduling system. These tasks are
prioritized using the distance value of the node, as in Dijkstra's
algorithm. For the sake of comparability to other works on
single-source shortest paths, we will use the terms \emph{node} and
\emph{relax} instead of \emph{task} and \emph{execute} throughout this
section. Instead of a priority queue, we let our scheduling system
choose the next node (task) to relax. Pseudocode for the SSSP tasks is
given in Listing~\ref{lst:sssp}.

We diverge from Dijkstra's algorithm whenever a better distance value
is found for an active node in the priority queue. Instead of updating
the priority using a \emph{decrease key} operation, we reinsert the
node into the priority queue. The previous instance of the same node,
with an old distance value as priority, is lazily removed as soon as
it is noticed. Our scheduling data structures have been implemented to
recognize such nodes lazily, and automatically remove them when
recognized. For more discussion on \emph{dead task elimination},
see~\cite{WimmerCedermanTraffTsigas13:strategies,ppopp13}.

The goal of this evaluation is to show that, using $k$-priority data
structures, the amount of useless work generated is small compared to
the actual work, and that bounds can be given on the amount of useless
work generated.

\subsection{Theoretical analysis}

For the theoretical analysis we use a simplified model of
task-parallel computations: the system operates on a global pool of
nodes (tasks), which are ordered by their tentative distance
value. Execution occurs in temporal phases and, in each phase, up to
$P$ nodes with the lowest tentative distance values are relaxed.  We
assume an \emph{ideal priority queue}, in which all nodes are visible
to all places at the beginning of each phase.  We are interested in
upper bounding the amount of useless work that is performed during
each phase.  Similar bounds have previously been obtained for
$\Delta$-stepping and other SSSP algorithms~\cite{Meyer03,MeyerSanders03}.

\subsubsection{Formal model}

We are given an undirected graph $G=(V,E)$ (with $n=|V|$ and $m=|E|$),
a source node $s\in V$ and a positive weight function $\lambda: E
\rightarrow \reals^+$. For each temporal step $t$ we maintain a
partition of $V$ into two subsets: $V=A_t \cup B_t$, of sizes
$\alpha_t$ and $\beta_t$ ($\forall t\; \alpha_t+\beta_t=n$). The set
$A_t=\{a_t(1), a_t(2), \ldots, a_t(\alpha_t)\}$ contains the active
nodes, $B_t=\{b_t(1), \ldots, b_t(\beta_t)\}$ the inactive nodes. For
each node $v\in V$ we also keep a tentative distance $\delta_t(v) \in
\reals \cup \{\infty\}$. Let $d_t(i)=\delta_t\left(a_t(i)\right)$, we
assume the nodes in $A_t$ to be ordered by $d_t$, with ties broken
arbitrarily, i.e., $\forall i\in\{1,\alpha-1\} \; d_t(i)\le
d_t(i+1)$. Initially ($t=0$) we have $A_0=\{s\}$, $B_t=V\setminus
\{s\}$, $\delta_0(s)=0$ and $\delta_0(v\not= s)=\infty$.  In each
phase (up to) $P$ active nodes $\Phi_t=\{a_t(1), \ldots, a_t(P)\}$
with lowest $d_t$ are selected and relaxed, so that at the end of the
phase the tentative distance of a generic node $w\in V$ is
\begin{equation*}
  \delta_{t+1}(w)=\min\left\{\delta_t(w),\min_{v\in\Phi_t}\left\{\delta_t(v)+\lambda(v,w)\right\}\right\} \mf
\end{equation*}
Any node (whether active or inactive) which had its tentative distance
updated is moved into $A_{t+1}$, relaxed nodes which were not updated
are moved into $B_{t+1}$, all the other nodes remain in their former
sets for the next time phase. The algorithm terminates, at some time
$\tau<n$, when there are no more active nodes, i.e.,
$A_\tau=\emptyset$ and $B_\tau=V$, with the nodes reachable from $s$
having a finite distance.

We restrict our analysis to Erd\H{o}s-R\'enyi random
graphs~\cite{ErdosR59,Gilbert59} of parameters $n$ and $p$, i.e.,
graphs with $n$ nodes, for which each of the $n \choose 2$ possible
edges has independent probability $p$ to occur. Furthermore, we
assign, independently for each edge, a weight uniformly distributed
between 0 and 1: $\forall e \in E, \; \lambda(e)\in {\cal U} ]0,1]$.
    We assume the source node $s$ to be chosen uniformly at random in
    $V$. In order to ensure, w.h.p., the connectedness of the graph,
    we also assume $p>\frac{(1+\epsilon)\ln n}{n}$ for some
    $\epsilon>0$.

\subsubsection{Useless work}

We say that a node is \emph{settled} at time $t$ when its tentative
distance is equal to its final distance. Every time that a node which
is not settled is relaxed, useless work is performed, since the node
will need to be relaxed again when its tentative distance is going to
be updated (Dijkstra's algorithm only relaxes nodes which are settled,
thus performing only useful work, but, on the other hand, it is hard
to parallelize because of its dependencies).  The following theorem
(proof in Section~\ref{sec:proofs-lemmata}) bounds the useless work
$W_t$ performed by our algorithm as a function of $d_t$.

\begin{theorem}\label{thm:useless-work}
Let $W_t$ be the useless work performed at time $t$ by our algorithm,
using an ideal priority queue, and let $h_t(i,j)=d_t(j)-d_t(i)$. We
can bound $W_t$ from above as:
\begin{equation*}
    W_t \le \sum_{j=1}^{P}\left[1- \prod_{i=1}^{j-1}\prod_{L=1}^{n-1}
      \left(1-\frac{\left(p \, h_t(i,j)\right)^L}{L!}
      \right)^{\frac{(n-2)!}{(n-1-L)!}} \right] \mf
  \end{equation*}
\end{theorem}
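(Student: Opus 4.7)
The plan is to bound $\mathbb{E}[W_t\mid d_t(1),\dots,d_t(P)]$ summand by summand, writing $W_t=\sum_{j=1}^{P}\mathbf{1}[a_t(j)\text{ is not settled at time }t]$ and establishing the bracketed expression as an upper bound on $\Pr[a_t(j)\text{ unsettled}\mid d_t]$. Summing over $j$ then gives the theorem.

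The first substantive step is a structural claim: if $a_t(j)$ is unsettled, then there exist $i<j$ and a simple path in $G$ from $a_t(i)$ to $a_t(j)$ whose total weight is strictly less than $h_t(i,j)$. To see this, walk along any true shortest path from $s$ to $a_t(j)$ and take the last vertex $v$ on it whose tentative distance at time $t$ coincides with its true distance (the source always qualifies, so $v$ exists). Then $\delta_t(v)=d^\ast(v)<d^\ast(a_t(j))<d_t(j)$. A short argument---noting that a vertex in $B_t$ with correct tentative distance must already have had that correct distance at the moment of its last relaxation, and hence must have propagated settlement to its successor on the chosen shortest path---rules out $v\in B_t$ and forces $v=a_t(i)$ for some $i<j$; the residual segment of the path is the required short path.

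For a simple path $P$ of length $L$ between $a_t(i)$ and $a_t(j)$, let $E_P$ denote the event ``all $L$ edges of $P$ are present in $G$ and their weights sum to less than $h_t(i,j)$''. The structural claim gives
\[
\Pr[a_t(j)\text{ unsettled}\mid d_t]\ \le\ 1-\Pr\Bigl[\,\bigcap_{i<j}\bigcap_{L=1}^{n-1}\bigcap_{P}\overline{E_P}\ \Big|\ d_t\,\Bigr] \mf
\]
Each $E_P$ is monotone increasing in the product of independent variables $(X_e)_e\sim\mathrm{Bernoulli}(p)$ and $(-W_e)_e$ (edge indicators and negated weights), so by the FKG inequality on this product measure the events $\overline{E_P}$ are positively correlated and the intersection probability is at least $\prod_{i<j}\prod_{L=1}^{n-1}\prod_{P}\bigl(1-\Pr[E_P]\bigr)$. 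A standard volume-of-simplex computation gives $\Pr[E_P]\le(p\,h_t(i,j))^L/L!$, uniformly in $h_t(i,j)\in\reals^+$, while the number of simple paths of length $L$ between two fixed vertices of $K_n$ equals $(n-2)!/(n-1-L)!$, obtained by ordering a choice of $L-1$ intermediate vertices from the remaining $n-2$. Substituting and summing over $j$ yields the stated bound.

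The main obstacle I expect is the first, structural step: carefully ruling out the case where the latest settled predecessor on the true shortest path lies in $B_t$ demands a little induction on phases in the ideal-priority-queue model. Once the structural claim is in hand, the probabilistic part is routine---a single FKG application on a product measure plus classical counting.
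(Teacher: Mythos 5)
Your proposal follows essentially the same route as the paper's proof: decompose $W_t$ over the $P$ relaxed nodes, reduce ``$a_t(j)$ is unsettled'' to the existence of a path of weight less than $h_t(i,j)$ from some earlier active node $a_t(i)$, multiply over $i$, over lengths $L$, and over the $(n-2)!/(n-1-L)!$ paths of length $L$ between two fixed endpoints, and use $\Pr[E_P]\le (p\,h_t(i,j))^L/L!$. Where you differ is in presentation, mostly to your credit: you actually argue the structural claim (the paper asserts the ``unsettled iff short path from an earlier active node'' equivalence without proof, and your walk along a true shortest path to the last vertex whose tentative distance is already correct establishes the one direction that the upper bound needs); you invoke FKG/Harris on the product measure of edge indicators and negated weights where the paper only remarks that the non-existence of one short path ``does not decrease the probability'' that another is absent; and you obtain the $h^L/L!$ factor directly from the volume of the simplex instead of via the paper's convolution lemma and Corollary~\ref{cor:minpath}.

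The one genuine gap is that you treat the probabilistic part as a ``routine'' computation on the unconditioned product measure after conditioning on $d_t(1),\dots,d_t(P)$. But the identities of the active nodes $a_t(i)$, $a_t(j)$ and their tentative distances at time $t$ are functions of the graph and of the execution history, so conditioning on them does \emph{not} leave the edge indicators and weights independent with their original laws; both your FKG step and your evaluation of $\Pr[E_P]$ are carried out under a measure you have not justified. The paper cannot justify it either: it bridges exactly this point with Conjecture~\ref{cj:randomness}, which asserts (and validates only by simulation) that the conditional probability of a short path between two specific active nodes is bounded above by the corresponding probability between two uniformly random nodes of a fresh random graph. Your argument needs the same assumption and should state it explicitly; without it the bound is not established. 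Everything else in your outline goes through.
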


\begin{remark}
A simpler (but weaker) form of this bound can be obtained by
substituting $h_t(i,j)$ with $h_t^*=\max_{i,j} h_t(i,j) = h_t(1,P)$.
\end{remark}

\subsubsection{Proofs and lemmata}
\label{sec:proofs-lemmata}

In order to simplify the analysis, we assume the following property to
hold when the number of nodes $n$ is large. The property has been
experimentally validated using the simulator presented in
Section~\ref{sec:simulation}.
\begin{conjecture}
\label{cj:randomness}
    Throughout the execution of the ideal priority queue SSSP algorithm,
    for all values of $t\in\nats$, $1 \le i < j \le P$ and $h\in\, ]0,1]$,
    the probability that there is a path of weight less than $h$
    between $a_t(i)$ and $a_t(j)$ is bounded from above by the
    probability that such a path exists in a random graph, between two
    (uniformly) random nodes.
\end{conjecture}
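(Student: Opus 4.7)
The plan is to establish the conjecture by a principle-of-deferred-decisions argument combined with a coupling against the unconditioned random graph. The fundamental observation is that, although the graph $G$ is drawn once at the outset, the algorithm only \emph{reveals} the presence and weight of an edge when it examines one of its endpoints during a relaxation. Everything else remains fresh, which is what should let the pair $\bigl(a_t(i),a_t(j)\bigr)$ look, with respect to its pairwise connectivity in the unrevealed subgraph, like a uniformly random pair of vertices.

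First I would formalize the revealed/fresh dichotomy. Let $R_t \subseteq V$ be the set of nodes that have been relaxed in some phase up to time $t$, and let $\mathcal{E}_t$ be the set of edges with at least one endpoint in $R_t$. I would argue that, conditioned on any event measurable with respect to the first $t$ phases of the algorithm, the indicators and weights of the edges in $E \setminus \mathcal{E}_t$ remain independent, with Bernoulli$(p)$ presence and ${\cal U}\,]0,1]$ weights. Since $a_t(i)$ and $a_t(j)$ are active and hence not in $R_t$, every edge whose endpoints both lie in $V \setminus R_t$ --- including $\{a_t(i),a_t(j)\}$ itself --- is still fresh.

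Second I would prove a crowding-out step: short paths from $a_t(i)$ to $a_t(j)$ that use only already-examined edges are essentially forbidden by the algorithmic state. Any such path would, through the chain of relaxations that revealed its edges, have propagated a tentative distance upper bound to $a_t(j)$; in the regime where the path weight is small relative to $h_t(i,j) = d_t(j) - d_t(i)$ this contradicts the observed $d_t(j)$. Consequently the event that $a_t(i)$ and $a_t(j)$ are connected by a short path is essentially driven by paths through the fresh subgraph, whose joint distribution one can describe explicitly.

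The remaining and hardest step is to couple the conditional law of this fresh subgraph --- with $a_t(i), a_t(j)$ as two marked vertices selected by the algorithm --- with an unconditioned $G(n,p)$ together with a uniformly random pair of marked vertices, in such a way that the event ``a path of weight less than $h$ exists'' is stochastically dominated. Intuitively, the conditioning biases the marked vertices to lie in a region where short further connectivity has been \emph{depleted} rather than concentrated, since additional short paths would already have been exploited to shrink $d_t(j)$ in earlier phases. Making this monotonicity rigorous, uniformly over $h \in {]0,1]}$ and across all possible algorithmic histories $\Phi_1,\ldots,\Phi_{t-1}$, is where I expect the real difficulty to lie, and is presumably why the authors state this only as a conjecture backed by simulation rather than as a theorem.
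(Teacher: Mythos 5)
The statement you are trying to prove is not proved in the paper at all: it is explicitly labelled a \emph{conjecture}, assumed to hold for large $n$ and validated only empirically with the simulator of Section~\ref{sec:simulation}. So there is no ``paper proof'' to compare against, and the relevant question is whether your argument actually closes the gap the authors left open. It does not. Your own final paragraph concedes that the decisive step --- the stochastic-domination coupling between the conditional law of the fresh subgraph with the two algorithmically selected marked vertices and an unconditioned $G(n,p)$ with uniformly random marked vertices, uniformly over $h$ and over all histories $\Phi_1,\ldots,\Phi_{t-1}$ --- is not established. That step is the entire content of the conjecture; without it you have a plausible heuristic, not a proof.

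Two further points where the sketch is shakier than it sounds. First, the ``crowding-out'' step is not airtight: a revealed path $a_t(i)=\pi_0,\pi_1,\ldots,\pi_L=a_t(j)$ through relaxed intermediate nodes does give $d_t(j)\le \delta_t(\pi_{L-1})+\lambda(\pi_{L-1},a_t(j))$, but nothing forces $\delta_t(\pi_1)$ to be close to $d_t(i)+\lambda(a_t(i),\pi_1)$, since tentative distances need not have propagated along the path in the direction starting from $a_t(i)$; so you cannot simply declare short revealed paths ``essentially forbidden.'' Second, your monotonicity intuition (that conditioning on the algorithmic state \emph{depletes} short connectivity between active nodes) is exactly the point that could fail: two simultaneously active nodes have comparable tentative distances from $s$, which is a form of metric proximity and could plausibly \emph{increase} the chance of a short direct path between them relative to a uniformly random pair. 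Deciding which effect wins is the open problem, and your proposal, like the paper, leaves it unresolved.
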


\begin{lemma}
  Let $h\in\,]0,1]$ and let $\pi^L=\left(\pi_0, \pi_1, \ldots, \pi_{L-1},
    \pi_L\right)$ be a path in $G$
  chosen uniformly at random among the paths
  of length $L$, such that the subpaths $\pi'=(\pi_0,\ldots,\pi_{L-1})$ and
  $\pi''=(\pi_{L-1},\pi_L)$ both have weights smaller than $h$.  Let
  $f^L(\lambda)$ be the probability density function associated with the total
  weight $\lambda(\pi^L)=\sum_{i=1}^{L} \lambda(\pi_{i-1},\pi_i)$. We can write
  $f^L$ as
  \begin{equation*}
    f^L(\lambda) =
    \begin{cases}
      \frac{\lambda^{L-1}}{h^L} & \lambda \in \, ]0,h]\\
      \frac{1}{h}-\frac{(\lambda-h)^{L-1}}{h^L} & \lambda \in \, ]h,2h]\\
      0 & \text{otherwise} \mf
    \end{cases}
  \end{equation*}
\end{lemma}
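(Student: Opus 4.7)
The plan is to reduce the statement to a pure computation with independent uniform random variables. Since the edge weights $\lambda(e)$ are i.i.d.\ $\mathcal{U}]0,1]$ and independent of the combinatorial choice of path, conditioning on the subpath constraints only restricts the joint distribution of the $L$ weights along $\pi^L$. Write $S=\lambda(\pi')=\sum_{i=1}^{L-1}\lambda(\pi_{i-1},\pi_i)$ and $Y=\lambda(\pi'')=\lambda(\pi_{L-1},\pi_L)$. Then $S$ and $Y$ are independent, $Y$ is uniform on $]0,1]$ conditioned on $Y<h$ (hence uniform on $]0,h]$ with density $1/h$), and $\lambda(\pi^L)=S+Y$. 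So I only need the conditional density of $S$ given $S<h$, and then a one-dimensional convolution.

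First I would compute the density of $S$, which is the sum of $L-1$ i.i.d.\ $\mathcal{U}]0,1]$ variables. Because $h\le 1$, the event $\{S<h\}$ lies inside the ``first piece'' of the Irwin--Hall distribution, where only the $k=0$ term of the usual sum survives, giving $f_S(s)=s^{L-2}/(L-2)!$ on $]0,h]$. Equivalently, this is the ($(L-2)$-dimensional) volume of the simplex slice $\{x_1+\dots+x_{L-1}=s,\ x_i\ge 0\}$, and $\Pr[S<h]=h^{L-1}/(L-1)!$ is the full simplex volume. Dividing yields the conditional density
\begin{equation*}
  f_{S\mid S<h}(s)=\frac{(L-1)\,s^{L-2}}{h^{L-1}},\qquad s\in\,]0,h] \mf
\end{equation*}

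Next I would convolve with $f_Y(y)=1/h$ on $]0,h]$. For $\lambda\in\,]0,h]$ the integration range is $s\in[0,\lambda]$, and
\begin{equation*}
  f^L(\lambda)=\frac{1}{h}\int_0^{\lambda}\frac{(L-1)s^{L-2}}{h^{L-1}}\,ds=\frac{\lambda^{L-1}}{h^L} \mc
\end{equation*}
matching the first case. For $\lambda\in\,]h,2h]$ the range becomes $s\in[\lambda-h,h]$, giving
\begin{equation*}
  f^L(\lambda)=\frac{1}{h}\cdot\frac{h^{L-1}-(\lambda-h)^{L-1}}{h^{L-1}}=\frac{1}{h}-\frac{(\lambda-h)^{L-1}}{h^L} \mc
\end{equation*}
matching the second case. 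Outside $]0,2h]$ both factors vanish, so $f^L\equiv 0$ there.

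The only nontrivial step is isolating the conditional density of $S$: one must observe that the constraint $h\le 1$ keeps the argument inside the initial polynomial piece of the Irwin--Hall density, so that no higher-order correction terms enter. A quick sanity check at $L=1$ (where $\pi'$ is empty and $f^1$ should be uniform on $]0,h]$) and at $L=2$ (where $f^2$ is the triangular density on $]0,2h]$) confirms the formula at the boundary; the general case then follows from the convolution above.
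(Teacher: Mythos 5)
Your proof is correct, and it reaches the result by a slightly different route than the paper. The paper proves the formula by induction on $L$: it establishes the base case $f^1$, defines the conditioned density $f_h^l(\lambda)=l\lambda^{l-1}/h^l$ on $]0,h]$ by restricting and renormalizing the inductively known $f^l$, and obtains $f^{L+1}=f_h^L\ast f_h^1$ at each step. You instead short-circuit the induction: you identify the conditional density of the $(L-1)$-edge prefix directly from the Irwin--Hall distribution (equivalently, the simplex-volume computation), observing that $h\le 1$ confines the calculation to the first polynomial piece so that $f_{S\mid S<h}(s)=(L-1)s^{L-2}/h^{L-1}$, and then perform a single convolution with the uniform density of the last edge. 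The two convolutions are literally the same computation (with the index shifted by one); what differs is how the prefix density is justified. Your version buys a shorter argument and makes explicit the role of the hypothesis $h\le 1$, which the paper leaves implicit; it also makes clear that the nested conditioning generated by the paper's recursion collapses to the single event $S<h$, since positive weights mean $S<h$ already forces every sub-prefix below $h$. The paper's induction buys self-containedness (it never appeals to Irwin--Hall) and matches the recursive structure $\tilde\rvar_t^L=\tilde\rvar_t^{L-1}\tilde\rvar_t^1/L$ used later in the proof of Theorem~\ref{thm:useless-work}. Both arguments share the same unstated modeling assumption that the weights along the uniformly chosen path remain i.i.d.\ uniform, so you are not assuming anything beyond what the paper does. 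Your separate treatment of the degenerate case $L=1$ (empty prefix) is appropriate, since the power-law density formula does not apply there.
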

\begin{proof}
  The proof is by induction on $L$.
  \par{\sc Base case.} For $L=1$, since the edge weight is uniformly distributed
  between $0$ and $1$, we clearly have
  \begin{equation*}
    f^1(\lambda) =
    \begin{cases}
      \frac{1}{h} & \lambda \in \, ]0,h]\\
      0 & \text{otherwise} \mf
    \end{cases}
  \end{equation*}
  \par{\sc Induction.} We assume now that the inductive hypothesis holds for all
  values $l\le L$. Let $f_{h}^l$ be the probability density function obtained by
  conditioning its weight $\lambda(\pi^l)$ to be smaller than $h$, i.e.,
  \begin{equation*}
    f_{h}^l(\lambda) =
    \begin{cases}
      \frac{l\lambda^{l-1}}{h^l} & \lambda \in \, ]0,h]\\
      0 & \text{otherwise} \mf
    \end{cases}
  \end{equation*}
  We have $\lambda(\pi^L)=\lambda(\pi')+\lambda(\pi'')$, where $\pi'$ and
  $\pi''$ are subpaths of length $L$ and $1$. Since $\lambda(\pi')$ and
  $\lambda(\pi'')$ are independent, the density function $f^{L+1}$ can be
  obtained by convolution of $f_{h}^L$ and $f_{h}^1$:
   \begin{align*}
     f^{L+1} &= f_{h}^L \ast f_{h}^1 &&
     \Rightarrow & f^{L+1}(\lambda) &=
    \begin{cases}
      \frac{\lambda^{L}}{h^{L+1}} & \lambda \in \, ]0,h]\\
      \frac{1}{h}-\frac{(\lambda-h)^{L}}{h^{L+1}} & \lambda \in \, ]h,2h]\\
      0 & \text{otherwise} \mc
    \end{cases}
   \end{align*}
   which concludes the induction.
\end{proof}
\begin{corollary}
  \label{cor:minpath}
  The probability that a (uniformly random) path $\pi^L$ has
  $\lambda\left(\pi^L\right) < h$, conditioned to $\lambda(\pi')<h$ and
  $\lambda(\pi'')<h$, is equal to $\frac{1}{L}$.
\end{corollary}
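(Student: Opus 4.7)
The plan is to observe that the corollary is essentially an immediate computation using the piecewise density $f^L$ established in the preceding lemma. The conditioning hypotheses of the corollary ($\lambda(\pi')<h$ and $\lambda(\pi'')<h$) match exactly the conditioning already baked into the definition of $f^L$ in the lemma statement, so no further conditional manipulation is needed: the quantity to compute is simply
\begin{equation*}
    \Pr\!\bigl[\lambda(\pi^L)<h\bigr] \;=\; \int_0^{h} f^L(\lambda)\, d\lambda \mf
\end{equation*}

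On the interval $]0,h]$ the lemma tells us that $f^L(\lambda)=\lambda^{L-1}/h^{L}$, so the integral splits off only the first branch of the piecewise definition. I would evaluate
\begin{equation*}
    \int_0^{h} \frac{\lambda^{L-1}}{h^{L}}\, d\lambda \;=\; \frac{1}{h^{L}}\cdot\frac{h^{L}}{L} \;=\; \frac{1}{L} \mc
\end{equation*}
which is exactly the claimed probability.

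There is no real obstacle here beyond making sure the conditioning is interpreted consistently with the lemma: the density $f^L$ is already the \emph{conditional} density of $\lambda(\pi^L)$ given that the two subpaths each have weight less than $h$, so integrating it from $0$ to $h$ directly yields the conditional probability required by the corollary. The second branch of $f^L$ (supported on $]h,2h]$) is irrelevant for this computation, though as a sanity check one could verify that the total mass $\int_0^{2h} f^L(\lambda)\,d\lambda$ equals $1$, confirming that $f^L$ is a well-defined density.
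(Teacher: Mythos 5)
Your proposal is correct and follows exactly the paper's own argument, which simply says to integrate $f^L$ between $0$ and $h$; you carry out that integration of the first branch $\lambda^{L-1}/h^L$ explicitly and obtain $\frac{1}{L}$. Nothing further is needed.
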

\begin{proof}
  Just integrate $f^L$ between $0$ and $h$.
\end{proof}

\begin{proof}[\bf Proof (Theorem~\ref{thm:useless-work})]
  Let $1\le i < j \le \alpha_t$ and let $\pi_t^L(i,j)=\left(\pi_0=a_t(i),\pi_1,
    \ldots, \pi_L=a_t(j)\right)$ be a path between $a_t(i)$ and $a_t(j)$ of
  length $L$; we denote the weight of $\pi_t^L(i,j)$ as
  \begin{equation*}
    \lambda\left(\pi_t^L(i,j)\right)=\sum_{k=0}^{L-1}
    \lambda(\pi_{k},\pi_{k+1}) \mf
  \end{equation*}
  A node $a_t(j)$ is not settled if and only if there exists $i<j$ such that
  there exists a path $\pi_t^L(i,j)$ with $\lambda\left(\pi_t^L(i,j)\right) <
  d_t(j)-d_t(i)$.  Note that the non-existence of a particular path with weight
  less than $h_t(i,j)$ does not decrease the probability for another different
  path not to exist.
  Therefore, being
  $h_t(i,j)=d_t(j)-d_t(i)\le 1$, the probability $q_t(j)$ that $a_t(j)$ is
  settled can be bounded as
  \begin{align*}
    q_t(j) &\ge \prod_{i=1}^{j-1}\prod_{L=1}^{n-1} \Pr\left[\nexists \pi_t^L(i,j) :
      \lambda\left(\pi_t^L(i,j)\right) < h_t(i,j) \right]\\
    &= \prod_{i=1}^{j-1}\prod_{L=1}^{n-1} \left(1- \rvar^L_t(i,j) \right) \mc
  \end{align*}
  where $\rvar_t^L(i,j)$ is the probability that a path $\pi_t^L(i,j)$, with
  weight less than $h_t(i,j)$, exists.  Assuming that we are relaxing the first
  $P$ nodes of $A_t$, we can compute the expected value of the useless work
  performed at time $t$ as $W_t=\sum_{j=1}^{P}\left(1-q_t(j)\right)$.  Let
  $\tilde\rvar_t^L(i,j)$ be the probability that a \emph{particular} path
  $\pi_t^L(i,j)$ exists, with weight less than $h_t(i,j)$; we can bound
  $\rvar_t^L(i,j)$ as
  \begin{equation*}
    \rvar_t^L(i,j) \le 1 - \left( 1- \tilde\rvar_t^L(i,j)\right)^{\frac{(n-2)!}{(n-1-L)!}} \mf
  \end{equation*}
  Note that there exists a path $\pi_t^L(i,j)$ with weight less than $h_t(i,j)$
  if and only if the two subpaths $\pi'=(\pi_0,\ldots,\pi_{L-1})$ and
  $\pi''=(\pi_{L-1},\pi_L)$ exist, their weights are smaller than $h_t(i,j)$,
  and so is the sum of their weights. Because of Conjecture~\ref{cj:randomness}
  and Corollary~\ref{cor:minpath} we have $\tilde\rvar_t^1(i,j)=p\, h_t(i,j)$,
  which finally implies
  \begin{align*}
    \tilde\rvar_t^L(i,j) &= \frac{\tilde\rvar_t^{L-1}(i,j) \tilde\rvar_t^1(i,j)}{L}
    = \frac{\left(\tilde\rvar_t^1(i,j)\right)^L}{L!} \mc \\ \rvar_t^L(i,j) &\le 1 -
    \left(1-\frac{\left(p \, h_t(i,j)\right)^L}{L!}
    \right)^{\frac{(n-2)!}{(n-1-L)!}} \mf
  \end{align*}
\end{proof}

\subsubsection{$k$-priority data structures}

We can adapt our theoretical framework to support $k$-priority data
structures, which allow that up to $\rho$ of the newest tasks may not
be visible to all places, and may therefore not be executed even
though they would have been with the ideal data structure. For the
centralized $k$-priority data structure $\rho=k$, for the hybrid one
$\rho=Pk$.  The bound of Theorem~\ref{thm:useless-work} can be adapted
by changing the sum over all $j$'s to only the $j$'s corresponding to
nodes $a_t(j)$ which have been actually relaxed ($\sum_{j=1}^{P}
\rightarrow \sum_{j \in R_t}$, with $R_t=\{j:a_t(j) \text{ has been
  relaxed}\}$).
Similarly to the previous case, a simpler form of this bound can be
obtained by substituting $h_t(i,j)$ with $h_t^*$, defined as the
difference between the largest and smallest tentative distance of
nodes relaxed at time $t$, which implies $h_t^* \le \max_{i,j}
h_t(i,j) = h_t(1,P+\rho)$.

\subsection{Weakening the requirements $\rho$-relaxation}
\label{sec:weak-k-order}

All our current $k$-priority data structure implementations rely on a
\emph{temporal} formulation of $\rho$-relaxation (see
Section~\ref{sec:k-relax}), allowing only the last $k$ items added to
the data structure to be ignored ($k$ items per thread for the hybrid
$k$-priority data structure). This means that after $k$ \texttt{push}
operations a thread will make all its newest tasks globally available,
regardless of how many of the $k$ new tasks have already been
executed. As it turns out, our model does not require the temporal
formulation of $\rho$-relaxation, and relies on a weaker
\emph{structural} formulation instead that requires a \texttt{pop}
operation never to ignore more than $\rho$ items regardless of their
age.

This result opens up possibilities for priority queues that achieve
similar bounds but do not need to maintain the temporal property. We
believe that this will lead to priority queues with even better
scalability than the priority queues presented in this work, and first
results with such data structures look promising.

\subsection{Simulation}
\label{sec:simulation}

\begin{figure*}[t]
\centering
  \includegraphics[width=0.32\textwidth]{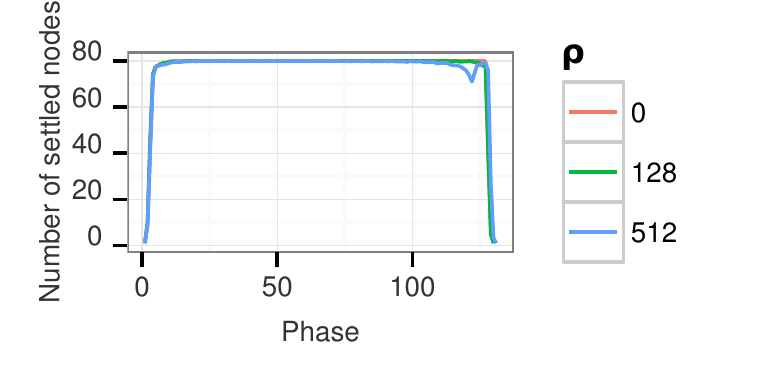}
%
  \includegraphics[width=0.32\textwidth]{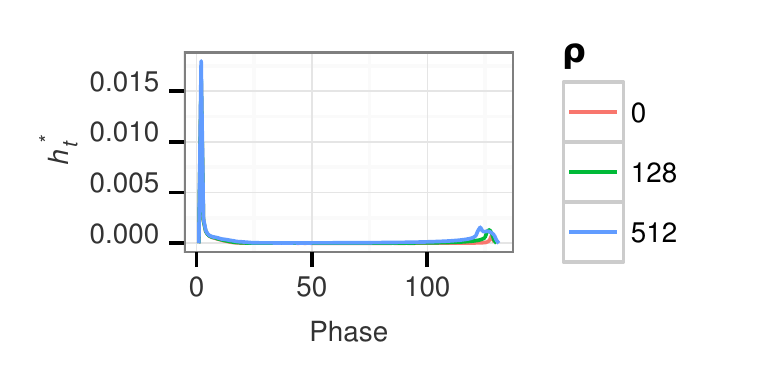}
  \includegraphics[width=0.32\textwidth]{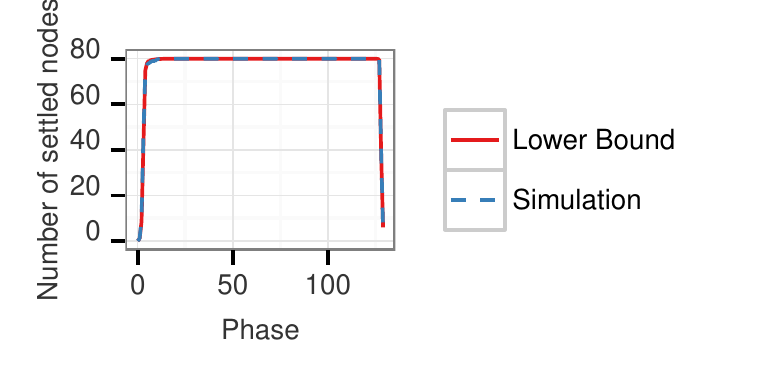}
\vspace{0.1in}
\caption{From left to right: nodes settled per phase; difference
  between biggest and smallest tentative distance of nodes relaxed per
  phase; comparison between the theoretical bound and the
  simulation. ($n=10000$, $P=80$, $p=50$\%)}
\label{fig:simt80}
\end{figure*}

We have used a simulator to bridge between the findings in our
theoretical model and the experiments in
Section~\ref{sec:experiments}.  The simulator helped understand why
$\rho$-relaxation gives such strong guarantees and was a valuable tool
to shape our theoretical analysis. The simulator uses the phase-wise
execution model used in the theoretical analysis and allows us to vary
the parameters $P$ and $\rho$. The simulator stores all active nodes
in a single array sorted by distance value. Execution proceeds in
phases, where in each phase the first $P$ nodes from the array are
relaxed. At the end of each phase the array is updated with all new
active nodes.

If $\rho>0$, newly created active nodes are marked with a sequence
id. To ensure randomness, nodes created in a single phase are shuffled
first before assigning sequence id's. The nodes with the $\rho$
highest sequence id's are stored separately from the sorted array of
nodes. These nodes represent the nodes that might be ignored due to
the $\rho$-relaxation. An exception is made if a node has the lowest
distance value of all nodes. This node is guaranteed to be relaxed in
the next phase, and is therefore added to the array of active nodes. A
deterministic tie-breaking scheme is used to ensure that only one node
has the lowest distance value of all at any time. In case that less
than $P$ nodes are available in the array, a random selection of all
other active nodes is relaxed by the other places.

\subsubsection{Simulation results}

We ran our simulator in a setting that closely resembles the setup
used in the experiments in Section~\ref{sec:experiments}. We use
exactly the same 20 random graphs used in the experiments and report
the mean. The number of places, $P$, is set to $80$, which corresponds
to the 80 cores of the machine used in our experiments. We use three
values for $\rho$: $0$, which represents an ideal priority data
structure, $128$ and $512$.

The first graph in Figure~\ref{fig:simt80} depicts the number of nodes
settled in each phase throughout the simulation. It can be seen that
for most of the execution almost all nodes that are relaxed are
already settled. Non-settled nodes are only encountered in the first
phases. For higher $\rho$ some variation can also be observed towards
the end when a significant amount of nodes is not visible to all
places. Throughout most of the execution almost all of the nodes that
are relaxed are already settled.

The middle graph in Figure~\ref{fig:simt80} shows $h_t^*$, the
difference between the largest and smallest distance value of nodes
relaxed in each phase. After only a few iterations, all of the nodes
that are relaxed have distance values close to each other, and the
distance values only grow a bit at the end of the execution, a bit
more with higher $\rho$. It is easy to see the close relationship
between distance values and nodes settled per phase.

Finally, the last graph in Figure~\ref{fig:simt80} gives a comparison
between the theoretical lower bound and the number of settled nodes in
the simulation. It can be seen that the calculated theoretical lower
bound on the number of settled nodes, and the number of nodes settled
in the simulation are very close.

\subsection{Experiments}
\label{sec:experiments}

The data structures have been evaluated on an 80-core Intel Xeon
system with 1\,TB of memory. Figure~\ref{fig:timeandtasks} shows the
average total execution time and number of spawned tasks for 20
undirected graphs, each with 10000 nodes, an edge probability of 50\%
and uniformly distributed random edge weights. The $k$ value is set to
512.

Ideally, a parallel implementation of single-source shortest paths
relaxes each node exactly once. This is the case if nodes are only
relaxed when they are settled. For our input graphs with $n=10000$,
this means that if more than $10000$ nodes were relaxed, some useless
work was performed. As can be seen in Figure~\ref{fig:timeandtasks},
close to no useless work is generated by any of the data structures,
with exception of work-stealing. With random stealing and only local
prioritization to go by, it generates more than twice the number of
necessary tasks. This shows up in the total execution time, which is
higher for work-stealing.

The parallel implementations are compared to a sequential
implementation of Dijkstra's algorithm (only shown for one
thread). Due to the small task granularity, the overhead for parallel
execution on all data structures is relatively high, but for two or
more threads the execution times drop below the sequential time. The
algorithm scales very well for up to 10 threads. For more threads the
algorithm becomes memory bandwidth bound. On the hybrid $k$-priority
data structure some more speedup can still be achieved up until 40
threads.

\begin{figure*}[t]
\includegraphics{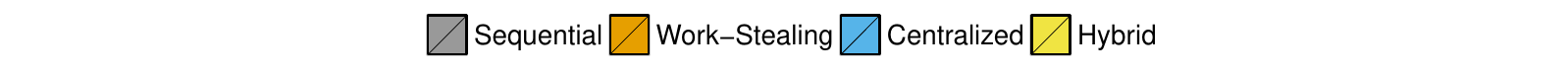}
\includegraphics{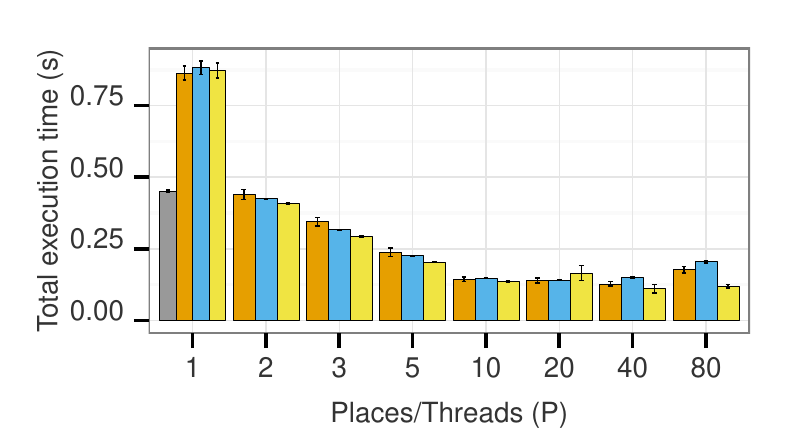}
\includegraphics{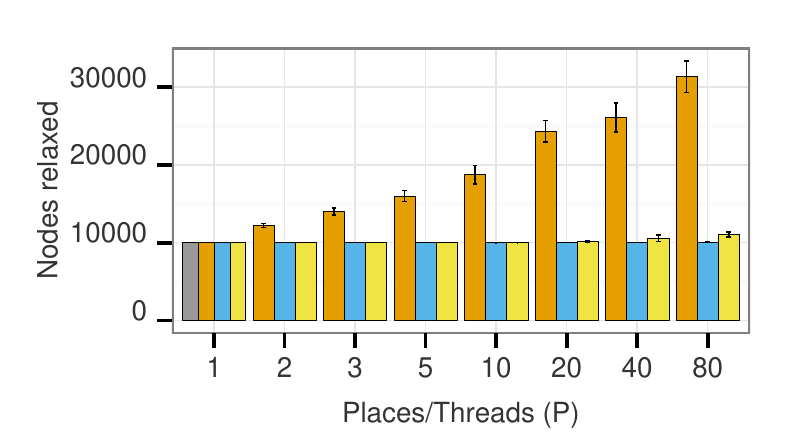}
\caption{Total execution time and number of nodes relaxed for varying
  $P$ ($n=10000$, $k=512$, $p=50$\%).}
\label{fig:timeandtasks}
\end{figure*}
\begin{figure*}[t]
\includegraphics{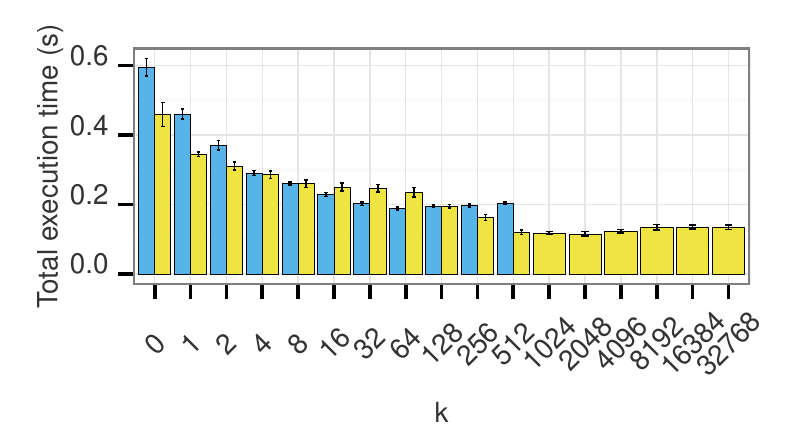}
\includegraphics{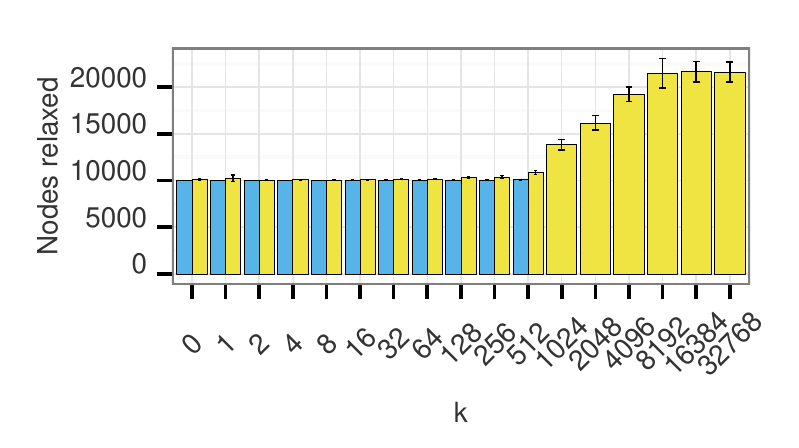}
\caption{Total execution time and number of nodes relaxed for varying
  $k$ ($n=10000$, $P=80$, $p=50\%$).}
\label{fig:timeandtasksvaryk}
\end{figure*}

Figure~\ref{fig:timeandtasksvaryk} shows, for the same graphs as in
the previous figure, the total execution time and number of spawned
tasks for different $k$ values. The number of places is fixed at
80. Here it can be seen that the centralized $k$-priority data
structure works best for $k$ in the range of $32$ to $128$. For higher
$k$ the cost of the sometimes required linear search outweighs the
gains.

The hybrid $k$-priority data structure shines with larger $k$ where it
exhibits scalability similar to work-stealing. While the wasted work
is higher than for the centralized data structure, it is still
bounded, and therefore low with the right values chosen for $k$.

The minimum $k$ required to match work-stealing performance in the
hybrid data structure is dependent on task granularity. The more
fine-grained tasks are, the higher the minimum required $k$ to match
work-stealing. It is interesting to note that even with really high
values for $k$, which result in no global synchronization, the wasted
work is still half of the wasted work in work stealing. This comes due
to the use of \emph{spying}, which allows a task to be visible to
multiple place unlike stealing, where a task is only seen by one
place.  We found $k=512$ to be the best compromise between scalability
and priority guarantees for the hybrid data structure on the given
machine.

\section{Conclusion}

We have developed lock-free data structures that can be used in task
scheduling systems to support priority scheduling. Each of these data
structures provides different trade-offs between scalability and
guarantees concerning the execution order of tasks. Our hybrid,
$k$-priority data structure allows to adjust the trade-offs using the
parameter $k$, enabling the programmer to dynamically chose between a
scalable data structure with performance comparable to work-stealing,
and a centralized data structure with strong semantic guarantees.

We evaluated all three data structures analytically, experimentally
and using a simulation. Using the single-source shortest path
algorithm as an example, we showed that, compared to work-stealing,
$\rho$-relaxation can provide a significant reduction of useless work
performed for the single-source shortest path algorithm, even with
relatively large values for $k$. Nonetheless, the limits to
scalability become visible in cases with very small task
granularities, where most of the time is spent on synchronization.

In future work we plan to explore additional data structures that
further reduce the bottlenecks while maintaining the flexibility of
the hybrid, $k$-priority data structure. We expect that $k$-relaxed
data structures that rely on the weaker, \emph{structural} formulation
of $\rho$-relaxation as described in Section~\ref{sec:weak-k-order}
will exhibit better scalability than the data structures presented in
this work, due to the reduced need for synchronization. First results
with structurally relaxed $k$-priority data structures look promising.

This work also shows how extensions to the task model, like priority
queues, can help to create simple and efficient parallel versions of
algorithms that are otherwise hard to parallelize, like Dijkstra's
single-source shortest path algorithm. In the future we plan to
provide additional scheduler data structures useful for specific
problems, allowing for an even more general use of task schedulers for
algorithm design. As an example, we plan to provide $k$-relaxed Pareto
priority queues with guarantees that can then be used for
parallelization of a multi-objective shortest path
search~\cite{Sanders13}.

\bibliographystyle{abbrv}
\bibliography{taskdatastructures}

\end{document}